\def\BibTeX{{\rm B\kern-.05em{\sc i\kern-.025em b}\kern-.08em
    T\kern-.1667em\lower.7ex\hbox{E}\kern-.125emX}}
\newtheorem{claim}{Claim}
\newcommand\project{EdgeMiner\xspace}
\newcommand\optimization{MFP Requesting\xspace}
\newcolumntype{R}{>{\raggedleft\arraybackslash}X}
\newcolumntype{L}{>{\raggedright\arraybackslash}X}
\begin{document}
\title{EdgeMiner: Distributed Process Mining at the Data Sources}

\author{Julia Andersen}
\orcid{0009-0009-3305-4359}
\affiliation{%
  \institution{Kiel University}
  \country{Germany}
}
\email{jro@informatik.uni-kiel.de}

\author{Patrick Rathje}
\orcid{0000-0003-3718-7115}
\affiliation{%
  \institution{Kiel University}
  \country{Germany}
}
\email{pra@informatik.uni-kiel.de}

\author{Christian Imenkamp}
\orcid{0009-0007-4295-1268}
\affiliation{%
  \institution{University of Bayreuth}
  \country{Germany}
}
\email{christian.imenkamp@uni-bayreuth.de}

\author{Agnes Koschmider}
\orcid{0000-0001-8206-7636}
\affiliation{%
  \institution{University of Bayreuth}
  \country{Germany}
}
\email{agnes.koschmider@uni-bayreuth.de}

\author{Olaf Landsiedel}
\orcid{0000-0001-6432-300X}
\affiliation{%
  \institution{Kiel University}
  \country{Germany}}
\email{ol@informatik.uni-kiel.de}

\renewcommand{\shortauthors}{J. Andersen et al.}

\begin{abstract}
Process mining is moving beyond mining traditional event logs and nowadays includes, for example, data sourced from sensors in the Internet of Things (IoT). 
The volume and velocity of data generated by such sensors makes it increasingly challenging to efficiently process the data by traditional process discovery algorithms, which operate on a centralized event log. 
This paper presents \project, an algorithm for distributed process mining operating directly on sensor nodes on a stream of real-time event data. In contrast to centralized algorithms, \project tracks each event and its predecessor and successor events directly on the sensor node where the event is sensed and recorded. As \project aggregates direct successions on the individual nodes, the raw data does not need to be stored centrally, thus improving both scalability and privacy.
We analytically and experimentally show the correctness of \project.
In addition, our evaluation results show that \project determines predecessors for each event efficiently, reducing the communication overhead by up to 96\% compared to querying all nodes.
Further, we show that the number of queried nodes stabilizes after relatively few events, and batching predecessor queries in groups reduces the average queried nodes per event to less than 2.5\%. 
\end{abstract}

\keywords{Distributed Process Mining, Event Data Stream,
In-Network Processing, Scalability, Internet of Things}

\maketitle

\section{Introduction}  

Mining the never-ending data stream of events sensed by the Internet of Things (IoT) provides deep insights into, for example, processes in smart manufacturing, smart homes, and healthcare and is known as process mining~\cite{bertrand_bridging_model_pm_iot,frenandez_health_process_tracking,smart_factories,janssen2020process}.
However, already today, IoT devices generate significantly more data than we can store or transfer to the cloud~\cite{
GE2018601,sensingasaservice, vander_aalst_decomposing_2013}.
This high data velocity is a key challenge for process mining algorithms, which traditionally operate on static event logs, i.e., assume that data is readily available for mining at a central location with vast storage and computing capabilities~\cite{aalst2016datascienceinaction}.
Parsing such event logs, established algorithms commonly construct so-called footprint matrices (FMs), in which they store the order of events, so-called directly follows relations \cite{alpha_miner, leemans2013discovering, heuristicMiner}. 
Next to the scalability challenge, such collection of events at a central location causes privacy concerns \cite{alpha_miner_distributed,mannhardt2019privacy,pika2020privacy}.

\begin{figure}[tb]
    \centering
    \includegraphics[width=\linewidth]{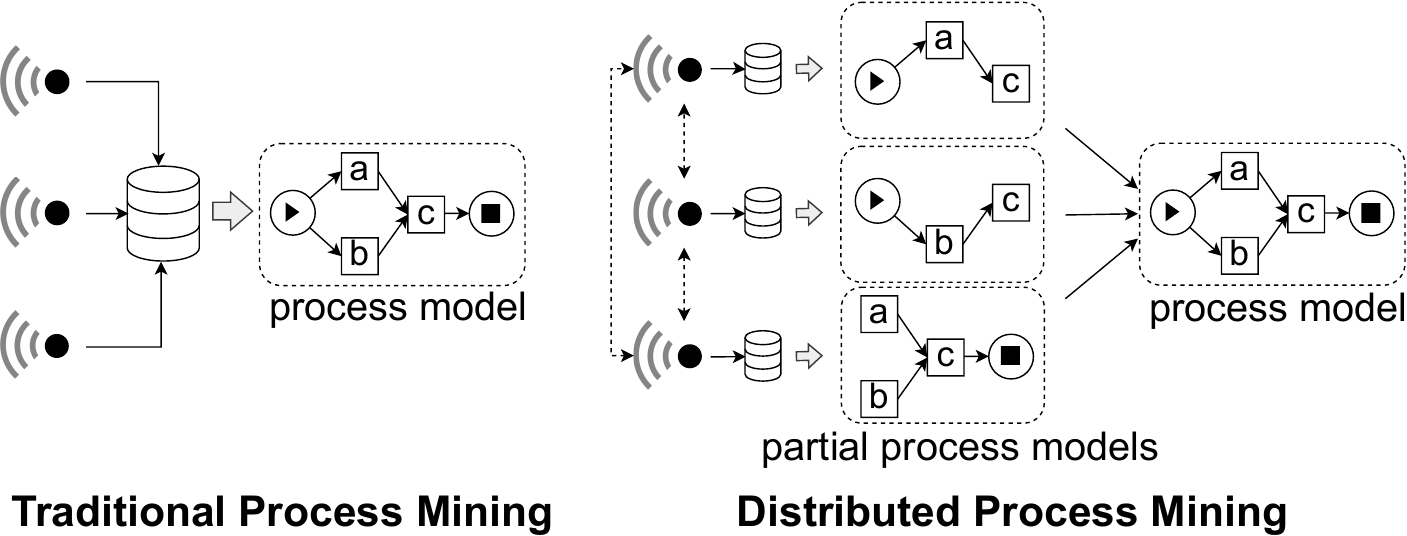}
    \caption{While traditional process mining (left) collects all events at a central entity, \project (right) processes them directly at the source, and only exchanges aggregates (partial footprint matrixes), increasing scalability and privacy.}
    \label{fig:centralizedToDistributed}
    \Description[From centralized to distributed process mining.]{While classical process mining (left) collects all events on a central entity, \project processes them in a distributed fashion directly at the source (right), and only exchanges data aggregates, increasing scalability and privacy.}
\end{figure}

In this paper, we introduce \project, an algorithm for distributed process mining: 
it composes FMs, the key building-block of process mining, in a distributed manner within a network of sensor nodes.
For each event that a node in \project detects, it locally stores the event as well as the predecessor and successor of this event and updates a local FM accordingly, see Fig.~\ref{fig:centralizedToDistributed}.
Thus, over time, each node constructs a partial FM that represents the view of this node on the direct successions of this process.
To derive the complete FM of the process, one of possibly multiple (central) entities queries the nodes for their partial FMs and combines these into a complete matrix.
As a result, \project enables (a) scalable mining, as a node only interacts with its predecessors after each event and, when queried, only exchanges aggregates, i.e., partial FMs, 
with the requesting entity and (b) privacy-preserving process mining, as nodes locally merely store their own as well as predecessor and successor events and a requesting entity only receives aggregate FMs instead of raw event streams. 
We prove, in this paper, that \project guarantees an FM identical to the FM when using a centralized approach.

Analyzing real-world datasets, our evaluation shows that activities statistically only have a few predecessor activities, which further only rarely change throughout the lifetime of a process. 
For example, even in the large Hospital Log~\cite{hospital}, comprising more than 600 distinct activities, the average count of predecessors per activity remains modest, averaging fewer than 7 predecessors.
As a result, for each event, a node can very efficiently determine the predecessor, reducing the communication overhead by up to 96\% compared to the naive baseline of querying all nodes in the network. 
Further, by batching predecessor queries in batches of, for example, 40 events, we show that \project can reduce the average number of queried nodes per event to less than 2.5\% of all nodes.
As a final result, we show that for most datasets, an intermediate FM already complies with the final FM by more than 90\% after less than 5\% of the events of the dataset.

In summary, the contributions of this paper are as follows.
\begin{itemize}
    \item {We introduce \project, a distributed process mining algorithm enabling distributed process discovery and conformance checking by operating directly where the events originate.}
    \item We ensure that the nodes only share aggregated event data with a requesting entity. 
    \item {We analytically and experimentally show the correctness of the merge of \project.}
    \item {We demonstrate that \project's interim results quickly converge to the footprint matrix of the whole event log.}
    \item We show that \project reduces the message count for the predecessor determination by up to 96\% compared to the naive approach and, for example, by 97.5\% when batching queries with a batch size of 40.
    \item We implement \project and publish its code open-source\footnote{Available at: \url{https://github.com/ds-kiel/EdgeMiner}}. 
\end{itemize}

The remainder of the paper is structured as follows.
Section~\ref{sec:Background} introduces the necessary background.
Next, Section~\ref{sec:RelatedWork} presents related work on distributed process mining.
Afterwards, we introduce the design of \project in Section~\ref{sec:Design}.
In Section~\ref{sec:Evaluation}, we evaluate \project.
We conclude the paper in Section~\ref{sec:Conclusion}.

\section{Preliminaries and Background}
\label{sec:Background}

In this section, we give a brief introduction into process mining and provide definitions of events, traces, event logs, direct successions and footprint matrices.
In process mining, we commonly discover a process model from an event log (process discovery) or check that a process follows a given model (conformance checking) \cite{process2011vanderAalst,aalst2016datascienceinaction,van_der_aalst_process_2012,van_der_aalst_process_2012-1}.
A common way to represent a process model is a Petri net \cite{peterson_petri_1977,murata_petri_1989}.

\textbf{Events and Event Logs: }
Process mining algorithms operate on event logs.
Each event in the log typically contains at least three attributes: an activity, a timestamp, and a case ID.
The case ID is used to distinguish different process instances.
A trace is a sequence of events corresponding to a particular case ordered by timestamp.

\textbf{Direct Successions: }
In the following, let $L$ be an event log over the activity set $T$.
We define direct successions and therewith definable relationships between two activities, following the definition by Gatta et al.~\cite{alpha_miner_distributed}.
%
\begin{definition}[Direct Succession]
\label{def:directsuccession}
    Activity $b$ directly succeeds $a$ in event log $L$, i.e. $a >_L b$, if and only if there exists a trace $\sigma = (t_0,\dots,t_{n-1})\in L$ and $j \in \{0,\dots,n-2\}$ such that $t_j = a ~\land~ t_{j+1} = b$. 
\end{definition}
%
\begin{definition}[Causality, No Direct Succession]
\label{def:symbols}
    With the help of Definition~\ref{def:directsuccession} we now define causality (\ref{eq:causality}) 
    and no direct succession (\ref{eq:nosuccession}) between two activities.
    \begin{align}
        a \rightarrow_L &~b   \!\!\!\!\!\!\!\!\!\!\!\!\!\!      &\Leftrightarrow    \quad ~a >_L b ~~\land~~ b\not>_L a \label{eq:causality}\\
        a ~\#_L &~b           \!\!\!\!\!\!\!\!\!\!\!\!\!\!       &\Leftrightarrow    \quad ~a \not >_L b ~~\land~~ b \not>_L a \label{eq:nosuccession}
    \end{align}
\end{definition}

\textbf{Footprint Matrix: }
A footprint matrix (FM) stores information regarding the direct succession relationships between activities within the event log.
For event log $L$, the corresponding footprint matrix $FM$ has dimension $\vert T_L \vert \times \vert T_L \vert$, where each activity in $L$ is mapped to a specific row and column in the matrix, i.e., we can enumerate the activities such that each activity number $i$ is an element of $\{0, \dots, \vert T_L\vert -1\}$.
An entry denotes how often the activity corresponding to the row was followed by the activity of the column.
This way, $FM_{ij}\in \mathbb{N}_0$ denotes the entry of the footprint matrix that describes the relationship between activities $i$ and $j$.
\begin{align*}
    FM_{ij} > 0 \quad &\Leftrightarrow \quad i >_L j \\
    FM_{ij} = 0 \quad &\Leftrightarrow \quad i \not>_L j
\end{align*}
For our design, we define partial footprint matrices containing only one specific column: 
For $i,j \in \{0, \dots, \vert T_L\vert -1\}$, we define $FM^i$ as the footprint matrix that only contains column $i$ and $FM^i_j$ as the entry of column $i$ in row $j$.

%

\textbf{Alpha Miner: }
To illustrate process mining and its use of FMs, we recapitulate the main steps of the seminal process discovery algorithm Alpha Miner (Algorithm~\ref{alg:alphaminer}). Later, we discuss how newer discovery algorithms extend the concept of the alpha miner and its FMs.

First, the algorithm denotes all activities occurring in the event log $L$ as $T_L$. 
Each activity of this set corresponds to a transition in the final output model computed by the algorithm. 
Then, the algorithm determines two sets, $T_I$ and $T_O$, which contain the start activities of the traces in $L$ and their end activities, respectively.
Subsequently, the algorithm determines all direct successions in $L$ which it stores in an FM.
This FM is the basis for the further calculations in the algorithm and the resulting Petri net.
Next, the algorithm generates a set $X_L$, which comprises pairs of activity sets. Each element of the first set is directly followed by every element of the second set, although not vice versa. 
Moreover, activities within each set do not directly succeed one another. 
The Alpha Miner minimizes the set of pairs, followed by the derivation of Petri net places from the minimized pairs in the next step. 
Finally, the algorithm adds arcs between these places and transitions, forming the resulting Petri net.

\begin{algorithm} \onehalfspacing
\caption{Alpha Miner}\label{alg:alphaminer}
\begin{algorithmic}[1]
    \State $T_L  = \{t\in T | \exists \sigma \in L: t \in \sigma\}$ 
    \State $T_I = \{first(\sigma)\in T | \forall \sigma \in L\}$
    \State $T_O = \{last(\sigma)\in T | \forall \sigma \in L\}$
    \State $\begin{aligned}[t]
        X_L = &\{(A,B) | A \subseteq T_L \land B \subseteq T_L \\[-3pt]
        & \land \forall a \in A \forall b \in B: a \rightarrow_L b \\[-3pt]
        &\land \forall a_1,a_2 \in A: a_1 \#_L a_2 \land \forall b_1,b_2 \in B: b_1 \#_L b_2\}\\[2pt]
        \end{aligned}$
    \State $\begin{aligned}[t]
        Y_L = &\{(A,B) \in X_L | \forall (A',B') \in X_L: A\subseteq A' \land B\subseteq B'\\[-3pt]
        &\Rightarrow (A,B) = (A',B')\} \\[2pt]
        \end{aligned}$
    \State $P_L = \{ p_{(A,B)} | (A,B) \in Y_L\} \cup \{i_L, o_L\}$
    \State $\begin{aligned}[t]
        F_L = 
        &\{(a, p_{(A,B)}) | (A,B) \in Y_L \land a \in A \} \\[-3pt]
        &\cup \{ (p_{(A,B)},b) | (A,B) \in Y_L\land b \in B \}  \\[-3pt]
        &\cup \{(i_L,t) | t \in T_I\} \cup \{(t,o_L) | t \in T_O\}\\[2pt]
        \end{aligned}$
    \State $\alpha (L) = (P_L, T_L, F_L)$
\end{algorithmic}
\end{algorithm}

To form the set $X_L$, we need to have knowledge of the $\rightarrow_L$ and $\#_L$ relationships between activities.
These relations solely depend on the direct successions between the activities.
An FM stores exactly those direct successions. 
For the Alpha Miner, it is not necessary to know the number of direct successions, just whether they exist or not.

\textbf{Heuristics Miner and Inductive Miner:}
Newer process discovery algorithms, extent on the concept of the Alpha Miner while keeping the FM as a key building block.
For example, the Heuristics Miner starts with the construction of a dependency graph \cite{heuristicMiner}.
To build the dependency graph, it calculates a so-called dependency measure between each two activities, depending on the number of direct successions between the activities.
As an FM stores all direct successions, the miner use it for this calculation.
The Inductive Miner also relies on direct successions stored in an FM. 
The core concept is to construct a directly-follows graph, which visually represents the FM along with start and end activity sets. 
From this graph, a process tree is formed by repeatedly partitioning the nodes of the directly-follows graph, capturing the structure of the process step by step.

\section{Related Work}
\label{sec:RelatedWork}


Today's approaches to distributed process discovery focus primarily on multiple compute nodes to speed up the processing of ever-growing, centralized event logs:
Van der Aalst~\cite{vander_aalst_decomposing_2013} presents one of the first methods for distributed process discovery.
He introduces a method for computing partial and overlapping process models in the form of Petri nets from partial event logs and defines a way to merge them to a Petri net of the complete event log. In contrast to \project, he creates models rather than FMs in a distributed manner. 
Others~\cite{alpha_miner_map_reduce} leverage Map-Reduce~\cite{dean2008mapreduce} to parallelize the Alpha Miner, distributing the computational workload across multiple compute nodes to 
handle the ever-growing size of event logs.
However, their approach still relies on an event log that is initially stored at a central location. 
Later works ~\cite{evermann_scalable_2016} bring Map-Reduce to the Flexible Heuristics Miner~\cite{flexibleHeuristicsMiner}.
Gatta et al.~\cite{alpha_miner_distributed} extends the primary motivation of the previous work (i.e., performance and scalability) to include privacy concerns in the healthcare domain, but still focuses on one event log per hospital.
In contrast to all these, \project emphasizes scalability, i.e., processing data directly at the source and privacy preservation by only sending aggregate information, i.e., partial FMs, to the requesting node. 

In the context of conference checking, the Single-Entry Single-Exit (SESE) breaks down large process models and event logs into smaller sub-processes to analyze them independently~\cite{MUNOZGAMA2014102}.
Similarly, passage-based decomposition~\cite{van2014process} and Projected Conformance Checking (PCC)~\cite{leemans2018scalable}
split logs into sets of activities that share direct successor-predecessor relationships: 
By solving each set independently and merging the results, it distributes, for example, conformance checking across multiple nodes.
Although these approaches efficiently handle large and complex event logs, they remain centralized. 
One recent work presents conformance checking in a distributed manner also based on footprint-matrices, yet -- in contrast to \project -- their work does not cover the underlying mechanism to distribute the calculation of the matrices~\cite{andersen2024discc}.

Streaming process mining~\cite{Burattin2022} takes a different approach to distributed process mining. It focuses mainly on the \emph{real-time} analysis of streams of events (i.e., event streams). 
Generally, those algorithms maintain computational state according to on previous events~\cite{6900341, 7376771, DBLP:journals/ijdsa/ZelstBHDA19, DBLP:journals/corr/abs-1212-6383} and employ approximations such as lossy counting~\cite{6900341}.
While those methods are also motivated by scalability and performance, they are not designed to compute FMs at the sensor nodes.

In contrast, \project distributes the FM computation and each sensor node computes a partial FM, enabling data aggregation at an earlier stage. 
As a result of processing events directly at their sources, \project increases scalability, reduces privacy risks and enables both process discovery and online conformance checking.

\section{Design}
\label{sec:Design}

In this section, we begin by discussing our assumptions and setting. Next, we give a design overview of \project, detail on its design, introduce optimizations, prove correctness, and 
discuss how \project enables distribution of well-known process mining algorithms. 

\subsection{Assumptions and Setting}

\subsubsection{Assumptions on Events and Event Logs}

In adherence to common practice, we assume events with the attributes case ID, activity name, and timestamp, whereas the case ID is unique for each case.
We assume timestamps within the same case to be distinct, ensuring unique FMs. 
In our evaluation in Section~\ref{sec:Evaluation}, we show that our assumptions are both realistic and applicable in numerous real-world scenarios.

\subsubsection{\project Setting}

\project targets typical IoT application settings: 
A network of IoT nodes that are, for example, integrated into machines in a smart factory or smart health setting. 
These sensors can communicate with each other, either directly or via the Internet.
Further, we assume that a membership management allows nodes to gain knowledge of other nodes and how to reach them, as common in IoT applications.
Finally, we assume that their clocks are well synchronized, using, for example, established clock-synchronization algorithms such as NTP~\cite{mills1985network}. 
This is essential, as accurate timestamps ensure proper event ordering. 
%
Without loss of generality, and to ease simplicity and comprehensibility, we assume each node detects exactly one type of activity.
Note that by introducing virtual nodes, we can easily extend the problem setting to allow each node to be responsible for multiple activity types.

\subsection{\project Algorithm}

\project consists of two phases:
Phase 1 operates continuously upon each event detected, and Phase 2 is executed upon a footprint matrix request.
In Phase 1, nodes communicate to determine predecessors when recording an event and note their directly-follows relations in their local FMs. 
As a result, every node continuously updates its local, partial FM.
Other than Phase 1, Phase 2 runs on demand, whenever a (central) entity requests the FM.
Then, all nodes send their partial FMs to the requesting entity, which constructs the FM.

\subsubsection{Phase 1 -- Event Ordering \& Partial Footprint Matrices}

\begin{figure*}[tb]
    \centering
    \includegraphics[width=1\linewidth]{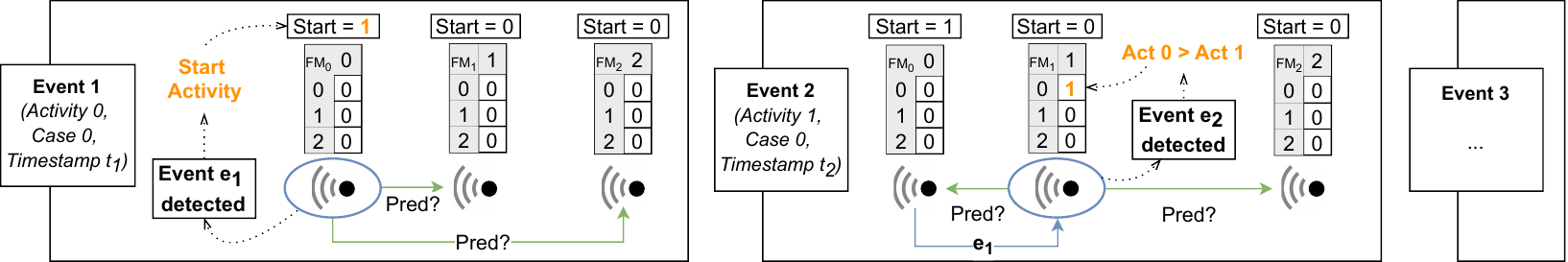}
    \caption{Phase 1 -- Event Ordering and Partial Footprint Matrix Construction in \project: Without a central entity, nodes determine the order of events collaboratively using message passing.
    In this example, event 1 is a start event. Therefore, after the node detects the event, it queries the other nodes for the predecessor event. In our example it does not get any positive responses, and, thus, denotes it detected a start event. Upon sensing event 2, the detecting node queries for the predecessor event, receives a response, listing event 1 as predecessor. It stores this information in its local FM.}
    \label{fig:toyExample}
    \Description[Phase 1 -- Event Ordering and Partial Footprint Matrix Construction in \project]{Phase 1 -- Event Ordering and Partial Footprint Matrix Construction in \project: Without a central entity, nodes determine the order of events collaboratively using message passing.
    In this specific setting, event 1 is a start event. Therefore, after the node detects the event, it queries for the predecessor event, for which it does not get any responses. Thus, the node sets a flag that it detected a start event. For event 2, after sensing the event and querying for the predecessor event, the respective node receives a response, including event 1. It increments the value in the corresponding cell of its local FM.}
\end{figure*}

For simplicity, we begin by introducing a naive version of \project, which is not communication efficient. 
Later, we introduce optimizations that ensure efficiency. 
When an event $e$ on node $i$ is triggered, node $i$ executes the following steps; see Fig.~\ref{fig:toyExample} for an example.
First, node $i$ stores event $e$ locally, including its case ID, the activity name and timestamp.
Then, it sends all other nodes a message containing $e$ to determine the predecessor event. 

When receiving the request from $i$, a node checks its storage for events in the same case that have no successor yet and whose timestamps are smaller than the timestamp of $e$.
If such an event exists, the node responds with the potential predecessor event data. 
If there are multiple responses, node $i$ picks the event with the latest timestamp that is still smaller than the own one, as predecessor. 
We denote the node with the predecessor event as node $j$. 
Node $i$ stores the newly determined predecessor locally, updates its local footprint matrix $FM^i$ by increasing the count in the respective cell, and informs $j$.
Subsequently, node $j$ saves $e$ as the successor event of its own event.
Events which do not have successors for a long duration are flagged as end activities. 

If a node has to choose its predecessor from several potential predecessor events, this indicates that events are happening in quick succession and (a) communication is delayed or (b) multiple nodes are identifying predecessors in parallel.
This may lead to temporarily incorrectly set predecessors.
\project automatically rectifies such inconsistencies during the run of the algorithm as soon as the node that belongs to the incorrectly set predecessor event receives a request from the node of its correct successor: The node sets the correct successor and informs the incorrect successor.
Then, the incorrect one starts the predecessor search again. 
A node then corrects the counters in its local FM.
The rectification process converges since each trace has a single start event, and because the nodes compare the timestamps of set successor events with the timestamps of incoming predecessor-search-requests.

Formally, each node $k \in \{0,\dots, n-1 \}$ only keeps track of its predecessor activities in its FM.
Therefore, each local FM has dimension $n \times 1$, with $n$ the number of activities.
The value at index $\ell \in \{0,\dots, n-1 \}$ describes how often $k$ directly follows $\ell$ in all of the cases, formally $\ell>_Lk$.
It is necessary to increment $FM^k_\ell$ by $1$ each time $k$ detects $\ell>_Lk$ and subtract $1$ if a direct succession gets fixed.
If a node does not find a predecessor, it determines that it recorded the start of a trace and logs the event as a start event. 
In \project, a node searches for its predecessor rather than its successor since it can be sure that the preceding event already occurred and, thus, the node can search for it immediately.

\subsubsection{Phase 2 -- Requesting a Footprint Matrix}
This phase begins when an outside entity requests the FM.
During this phase, the outside entity requests the local FMs of all nodes and start and end events within any case, see Fig.~\ref{fig:edgeminerphase2}.
Subsequently, each node responds with its local FM.
The requesting entity receives all responses, and it assembles the footprint matrix $FM^{all}\in \mathbb{N}_0^{n\times n}$ by concatenating the columns $FM^i$, $i\in\{0,\dots,n-1\}$ of all $n$ nodes, ordered by activity ID. 
Next, this FM serves as a process model for conformance checking, which compares the next events, traces, or sets of traces. 
Further, process discovery algorithms such as the Alpha Miner~\cite{alpha_miner}, the Heuristics Miner~\cite{heuristicMiner}, or the Inductive Miner~\cite{leemans2013discovering} can use this assembled FM for process discovery.
Please note that the outside entity only receives aggregate information, i.e., the partial FMs. 
Raw event data is only processed on the sensor nodes, ensuring both scalability and data privacy.

\begin{figure}[tb]
    \centering
    \includegraphics[width=1\linewidth,trim=0.7cm 0.5cm 0.9cm 0.5cm]{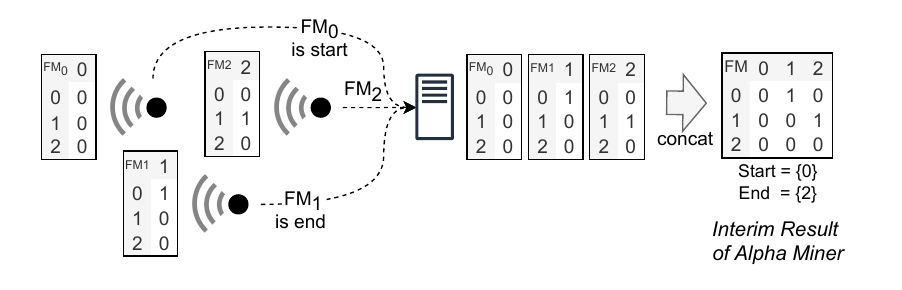}
    \caption{Phase 2 -- Requesting a Footprint Matrix: We request partial FMs and start/end activity flags from all nodes. 
    Upon receiving the data, we concatenate the matrices, form start and end activity sets, and compute the footprint matrix.}
    \label{fig:edgeminerphase2}
    \Description[Phase 2 -- Requesting a Footprint Matrix]{We request partial FMs and start/end activity flags from all nodes. 
    Upon receiving the data, we concatenate the matrices, form start and end activity sets, and compute the footprint matrix.}
\end{figure}

\subsection{Optimzaitons}
Next, we introduce three optimizations: Most-Frequent-Predecessors (MFPs), batching and a sliding window to enhance efficiency.

\subsubsection{Most-Frequent-Predecessors (MFPs)}
In Phase 1 of \project, a node -- upon detecting an event -- sends out requests to all other nodes to determine the event's predecessor. 
This results in $n-1$ requests where $n$ denotes the number of distinct nodes, i.e., activities in the event log.
To reduce communication overhead, \project contacts nodes in order of their likelihood of being the predecessor, i.e., it begins with the node that in the past was the most frequent predecessor according to its local FM. 
We show in a dataset analysis in Section~\ref{sec:Evaluation} that activities of real-world processes statistically only have a few predecessor activities, which further only rarely change throughout the lifetime of a process.
For example, even for datasets with a substantial number of activities, such as the Hospital Log~\cite{hospital} comprising more than 600 distinct activities, the average count of predecessors per activity remains modest, averaging fewer than 7 predecessors.
As a result, for each event, a node can very efficiently determine the predecessor, reducing the communication overhead by up to 96\% compared to querying all nodes in the network and thereby increasing scalability.
Additionally, our findings demonstrate that even datasets with numerous activities, such as the Hospital Log, stabilize in the number of requested nodes after only a relative small number of events.

\subsubsection{Batching}
To further minimize communication overhead in \project, we batch querying of predecessor events.
Rather than sending individual requests for each event, nodes accumulate a number of events before dispatching them in a single message.
The nodes continue to use \optimization.
As soon as the node finds the predecessor of an event, it removes the event from the batch. 
The node ceases to query additional nodes once the batch is empty.
For example, by batching predecessor queries in batches of 40 events, we show that \project can reduce the average number of queried nodes per event to less than 2.5\% of all nodes.

\subsubsection{Sliding Window}
To reduce resource demands, \project utilizes a sliding window:
Once the local event log reaches a given memory size or information age, the node discards it. 

\subsection{Correctness}
\label{subsubsec:correctness}

After introducing the algorithmic design of \project, we next show its correctness. 
We claim that merging the partial FMs in \project outputs the same FM as when calculating the FM based on a full event log.
Here, we discuss the intuition of the proof, a formal proof is in Appendix \ref{sec:Appendix}.

By definition of the FM, there is one column and row per activity. 
An entry of the FM describes how often the activity of the row was followed by the activity of the column.
A column of an FM therefore contains all predecessor relations of a specific activity.
In \project, the calculation of these columns is distributed to the nodes, and they are updated whenever the predecessor of an activity is found.
Every predecessor relation of a node, and therefore activity, is found by the design of \project and no direct successions are added to the FMs of multiple nodes but exactly one.
Thus, locally created columns in concatenated form ordered by activity ID must correspond exactly to the FM that is obtained when the entire log is considered and all direct successions are entered into a large matrix.

In our experimental evaluation in Section~\ref{sec:Evaluation}, we show (a) that both \project and established central algorithms result in identical FMs and thereby process models and (b) that in four out of five datasets an intermediate FM already complies with the FM of the whole event log by more than 90\% after less than 5\% of the events of the datasets.

\subsection{Distributed Mining with \project}
By distributing the computation of FMs, we can efficiently implement key process mining algorithms, such as, for example, the Alpha Miner, Heuristics Miner, and Inductive Miner, while also supporting online conformance checking in a scalable and privacy-preserving manner. 
Below, we outline how \project distributes each of these algorithms.

\subsubsection{Distributed Alpha Miner}
The Alpha Miner relies on identifying direct successions in event logs (cf. Section~\ref{sec:Background}). Hence, it builds on causal relations between activities. 
We leverage \project to distribute the computation of the FM from which the Alpha Miner constructs the process model. The Alpha Miner expects an FM consisting only of ones and zeros, i.e., if there is  directly-follows relation or not. As \project counts the number of successions, we simply set FM entries greater than zero to one.

\subsubsection{Distributed Heuristics Miner}
In contrast to the Alpha Miner, the Heuristics Miner not only considers whether a direct succession is present, but also considers the frequency and dependency measure of activity relationships. 
With a request, \project merges all partial FMs to a global FM at the requesting entity, which includes the frequencies of the direct successions. 
From these, it calculates the dependency measures of the activity relationships.
Next, the Heuristics Miner constructs a process model based on these frequencies and dependency measures. 

\subsubsection{Distributed Inductive Miner}
The Inductive Miner first constructs a directly-follows graph and then recursively splits it into subprocesses to derive a process tree as process model. 
To obtain the directly-follows graph the Inductive Miner needs the direct successions present in the event log as well as the set of start activities and the set of end activities.
\project offers the direct successions in the form of an FM along with the two sets.
Just as for the Alpha Miner, the frequencies of the direct successions are not relevant, so that all entries greater than zero are set to one.

\subsubsection{Distributed Footprint-Based Online Conformance Checking}
Online conformance checking continuously monitors process executions and compares them against a reference model in real time. Using \project, each node generates and maintains a local, partial FM that captures the current state of the process at that node. 
As new events are detected, the local FMs are updated, and each node can use its FM to perform conformance checking against a predefined process model.

\section{Evaluation}
\label{sec:Evaluation}

In this section, we evaluate \project in terms of correctness, accuracy, and performance.
We begin by discussing implementation, baselines and datasets in Section~\ref{subsec:Setup}.
Next, Section~\ref{subsec:datasetAnalysis} analyzes these datasets.
In Section~\ref{subsec:ExperimentalCorrectness}, we experimentally evaluate the correctness of \project. 
We examine how \optimization reduces the number of queried nodes for Phase 1 of the algorithm compared to naive querying in Section~\ref{subsec:queriednodes}.
Section~\ref{subsec:FMsizes} estimates FM sizes during an algorithm run, and Section~\ref{subsec:discussion} reflects on the evaluation results.

\subsection{Evaluation Setup}
\label{subsec:Setup}
\subsubsection{Implementation}
We implement \project in Python and use a network of docker containers to simulate the distributed nature of \project. 

\subsubsection{Baselines}
\label{subsec:baselines}

In our experiments, we compare \project with and without \optimization to two baselines:
(1) "Query All" represents the upper bound where all nodes are queried, while (2) a theoretical oracle-like approach serves as the lower bound.
The latter requires no requests if the predecessor corresponds to the node itself, and only one request to identify the predecessor otherwise.
Further, we compare \project to traditional process discovery algorithms \cite{alpha_miner,heuristicMiner,leemans2013discovering} to validate that the resulting FMs are identical.

\subsubsection{Datasets}
\label{subsec:data_sets}
We conduct experiments on the five rel-world datasets (event logs): Hospital Log~\cite{hospital}, Sepsis Cases~\cite{sepsis} and an IoT Event Log for Process Mining in Smart Factories\footnote{We use the \texttt{MainProcess.xes} file of the non-cleaned version.}~\cite{smart_factories}.
These process-mining event logs naturally support geographical distribution over multiple nodes.
To further substantiate our evaluation, we also include the BPI Challenge 2017~\cite{bpic17} and the Road Traffic Fine Management Process~\cite{traffic_fine} datasets. 
In our setting, each activity in a dataset represents a node in an IoT network.

\begin{table*}[tb]
    \centering
    \rowcolors{2}{white}{gray!10}
        \caption{Dataset properties and \project performance (Standard deviation in parentheses if not stated otherwise). 
        } 
        \begin{tabular}{lrrrrrrrrr}
            \textbf{Event\,Log}  & \textbf{Events} & \textbf{Activities\,(Start)} & \textbf{Cases} & \textbf{$\varnothing$Length} & \textbf{Self-Loops} & \textbf{$\varnothing$Predecessors} &  \textbf{$\varnothing$MFP\,Ratio} & \textbf{Reduction}\\
            Traffic Fine & 561,470 & 11 (1) & 150,370 & 3.73 &  4,306 & 6.36 (2.71) & 0.80 (0.17) & 64.94\% \\
            Sepsis & 15,214 & 16 (6) &1,050 & 14.49 & 1,034 & 7.19 (2.67) & 0.53 (0.20) & 81.30\% \\
            Smart Factories & 8,607 & 21 (1) & 271 & 31.76  & 5,854 & 4.62 (3.33)  & 0.69 (0.07) &  92.15\% \\
            %
            %
            BPIC 2017 &  1,202,267 & 26 (1) & 31,509  & 38.16 & 448,205  & 6.85 (5.30) & 0.79 (0.18)  & 93.40\% \\
            Hospital & 150,291 & 624 (29) & 1,143 & 131.49  & 26,542 & 6.78 (12.25) & 0.69 (0.28) & 96.13\% \\
        \end{tabular}
        \label{table:analyzedDataSetsPredecessor}
\end{table*}

\subsection{Dataset Analysis}
\label{subsec:datasetAnalysis}
We begin our experimental evaluation with an analysis of our real-world event logs, see Table~\ref{table:analyzedDataSetsPredecessor}.
Our analysis shows that even in large datasets like the Hospital Log with over 600 distinct activities, the average number of preceding activities per activity stays below 7, showcasing the potential of \optimization to reduce the average number of queried nodes per event.
Further, our analysis reveals that event logs like the Smart Factories dataset contain numerous self-loops, resulting from different lifecycle transitions and states within the recorded events.
This characteristic benefits \project, as self-loops do not require communication.

\subsection{Experimental Correctness \& Fitness}
\label{subsec:ExperimentalCorrectness}

In addition to the analytical correctness proof in Section~\ref{subsubsec:correctness}, we experimentally validate that the resulting FMs of \project are identical to the ones computed by today's centralized algorithms\footnote{we compare implementations in \texttt{pm4py} (https://pm4py.fit.fraunhofer.de)}.
We test both on randomly generated artificial datasets and common real-world event logs, see Section \ref{subsec:data_sets}.
For all datasets the resulting FMs are identical at the end of the log, see Fig.~\ref{fig:plot_fitness}. 

Next, we examine how the merged FMs converge over time:
Fig.~\ref{fig:plot_fitness} demonstrates that nearly all datasets reach 90\% fitness within less than 5\% of the trace
Here, the Smart Factories dataset is an exception, reaching 90\% fitness after processing approximately 41\% of the events, as it is quite small, see Table~\ref{table:analyzedDataSetsPredecessor}.
The results indicate that querying interim results can be highly informative, potentially eliminating the need to process all events 
before requesting the first FM. 
Moreover, our analysis suggests that reducing the sliding window for discarding old data has minimal impact on the resulting FM.

\subsection{Number of Queried Nodes}
\label{subsec:queriednodes}

To assess \project's performance, we analyze its two phases separately.
For the first experiments, we assume a batch size of one.

\begin{figure}[tb]
    \centering
    \includegraphics[width=\linewidth]{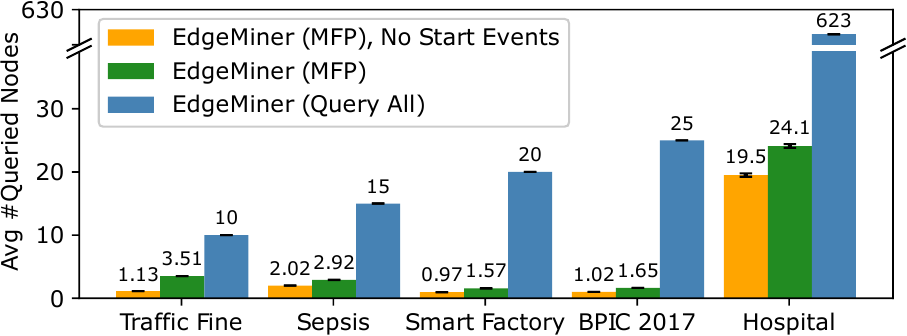}
    \caption{Average number of nodes queried with and without \optimization, including standard error. 
    \optimization and knowledge of the start events reduce communication demands by a factor of 7.5 to 30 depending on the dataset. 
    }
    \label{fig:bar_all}
    \Description[Comparison of average queried nodes with \optimization and querying all nodes]{Average number of nodes queried with and without \optimization, including standard error. 
    \optimization and knowledge of the start events reduce communication demands by a factor of 7.5 to 30 depending on the dataset.}
\end{figure}

\begin{figure*}[tb]
    \begin{minipage}[t]{0.32\textwidth}
        \centering
        \includegraphics[width=\linewidth, trim=0cm 0.3cm 0cm 0cm]{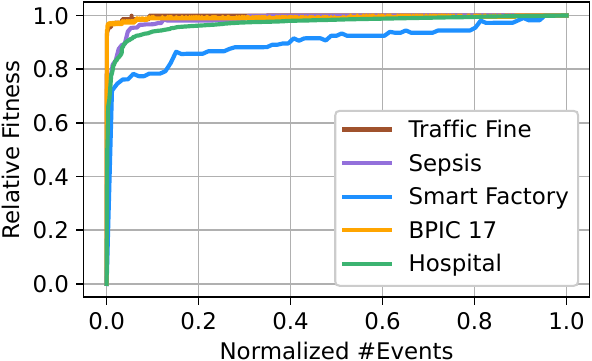}
        \caption{Fitness over time. 
        Intermediate FMs quickly converge to the centralized computed FM.
        The BPIC 2017 dataset, for example, already has a fitness of over 90\% after 200 events.
        }
        \label{fig:plot_fitness}
        \Description[Fitness over time.]{Intermediate FMs quickly converge to the centralized computed FM.
        The BPIC 2017 data set, for example, already has a fitness of over 90\% after 200 events.
        }
    \end{minipage}%
    \hfill
    \begin{minipage}[t]{0.32\textwidth}
        \includegraphics[width=1\linewidth, trim=0cm 0.3cm 0cm 0cm]{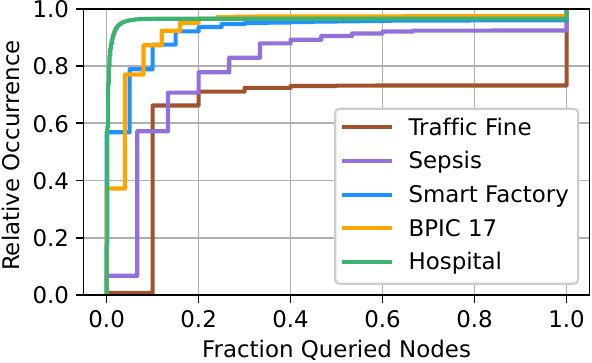}
        \caption{Cumulative Distribution Functions: 
        A steeper curve indicates a higher proportion of events that queried few nodes to identify their predecessors, reflecting greater efficiency.
        }
        \Description[Cumulative Distribution Functions]{Cumulative Distribution Functions: 
        A steeper curve indicates a higher proportion of events that queried few nodes to identify their predecessors, reflecting greater efficiency.}
        \label{fig:cdf_all}
    \end{minipage}%
    %
    %
    \hfill
    \begin{minipage}[t]{.32\textwidth}
        \includegraphics[width=1\linewidth, trim=0cm 0.3cm 0cm 0cm ]{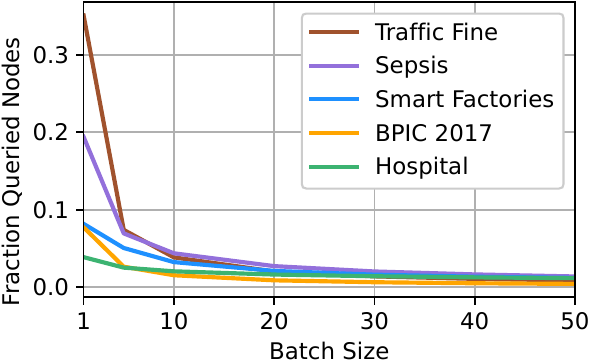}
        \caption{Relationship between batch size and average number of queried nodes per event, normalized. 
        Batching efficiently reduces the number of queries. 
        }
        \label{fig:batching}
        \Description[Relationship between batch size and average number of queried nodes per event]{Relationship between batch size and average number of queried nodes per event, normalized. 
        Batching efficiently reduces the number of quries. 
        }
    \end{minipage}
\end{figure*}

\begin{figure}[tb]
    \centering
    \includegraphics[width=1\linewidth, trim=0cm 0.3cm 0cm 0cm]{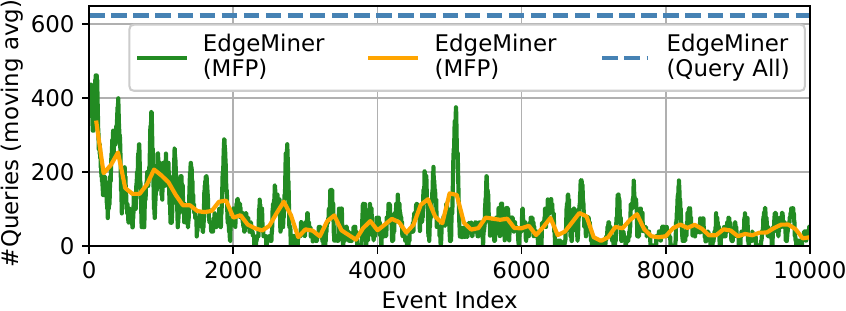}
    \caption{Number of nodes queried over time in the Hospital Log (moving average; green line: window size 50, step 1; orange line: window size 200, step 100). 
    The moving average stabilizes after around 3,000 events.
    A reference line (Query All) indicates the maximum number of requested nodes per event.
    Note that the plots are limited to the first 10,000 of over 150,000 events.
    }
    \label{fig:moving_window_hospital}
    \Description[Number of nodes queried over time in the Hospital Log]{Number of nodes queried over time in the Hospital Log (moving average; green line: window size 50, step 1; orange line: window size 200, step 100). 
    The moving average stabilizes after around 3,000 events.
    A reference line (Query All) indicates the maximum number of requested nodes per event.
    Note that the plots are limited to the first 10,000 of over 150,000 events.}
\end{figure}

\begin{figure}[tb]
    \centering
    \includegraphics[width=\linewidth]{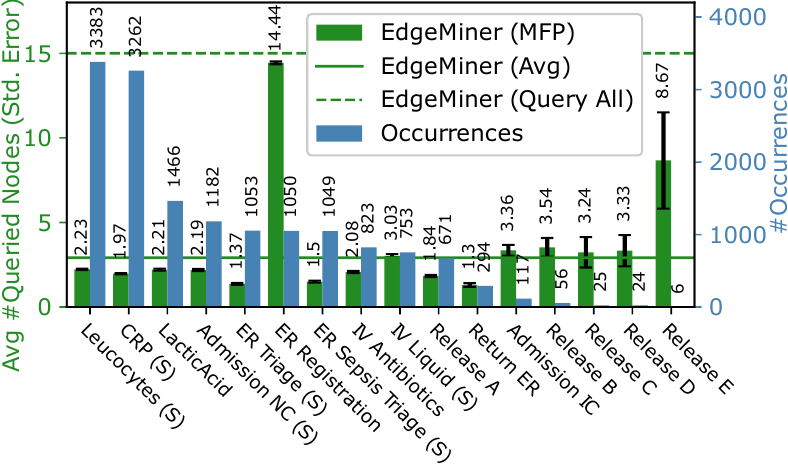}
    \caption{Bar diagram based on the Sepsis Cases illustrating the average number of queried nodes per activity (left y-axis) alongside the number of occurrences of each activity (right y-axis). 
    The x-axis displays the different activities of the event log and highlights the start activities (S).
    We plot the number of requested nodes when querying all nodes and the average number of queried nodes over all events using \optimization. 
    The diagram shows that activities with the highest average queries occur the least, excluding start activities.
    }
    \label{fig:bar_sepsis}
    \Description[Bar diagram based on Sepsis Cases]{Bar diagram based on the Sepsis Cases illustrating the average number of queried nodes per activity (left y-axis) alongside the number of occurrences of each activity (right y-axis). 
    The x-axis displays the different activities of the event log and highlights the start activities (S).
    The diagram shows that activities with the highest average queries occur the least, excluding start activities.}
\end{figure}

\subsubsection{Comparison Across Datasets}
Fig.~\ref{fig:bar_all} and Table~\ref{table:analyzedDataSetsPredecessor} illustrate, for each dataset, the average number of nodes queried per event to find the predecessor. 
Query All serves as the naive baseline of messaging every node in the network. 
With \optimization, nodes require significantly fewer requests to identify their predecessor event.
We achieve additional message reduction when the process has a designated start event, making predecessor requests for the respective node redundant.
Overall, the reduction increases with the number of activities, influenced by factors such as an expanding average case length, diminishing significance of start events, varying numbers of self-loops relative to total events, and the presence of just a few highly frequent predecessors.

%
Fig.~\ref{fig:cdf_all} shows Cumulative Distribution Functions (CDF) for all datasets. 
The Hospital Log has the steepest increase, with over 90\% of events needing queries to no more than 5\% of the nodes. 
Conversely, the Traffic Fine dataset has very few instances where the predecessor can be identified after querying just 5\% of the nodes.
This deviation stems from the limited number of 11 activities in the Traffic Fine dataset. 
Hence, within the 5\% threshold, only self-loops are considered.
\subsubsection{Stabilization over Time}
We now delve deeper into the specifics of the Hospital Log, as it is by far the largest one in terms of activities. 
Fig.~\ref{fig:moving_window_hospital} illustrates how the number of queried nodes stabilizes in the Hospital Log after fewer than 3,000 events.
There are still fluctuations, for example, at around 5,000 events. 
This event log is a good example of stabilization due to its large activity set.
Datasets with more activities take longer to stabilize, as each activity must occur multiple times for the MFPs to be identified.

\subsubsection{Activity-wise Analysis}
%

Next, we evaluate the number of queries per activity in the Sepsis Cases log, see Fig.~\ref{fig:bar_sepsis}.
In contrast to other event logs, the Sepsis Cases do not have one designated start event but six different ones.
However, about 95\% of cases start with "ER Registration", which explains why the corresponding bar clearly stands out above the others.
Although "Release~E" shows an increased query count, it is not a start activity. 
The increased query count results from the limited occurrence (only six times), indicating non converged predecessor stabilization.
The situation is similar for the four activities to its left.
They occur infrequently, which also explains the higher standard error.

\subsubsection{Batching}
\label{subsubsec:batching}
Fig.~\ref{fig:batching} illustrates the reduction in the average number of queried nodes per event as the batch size increases.
For example, at a batch size of 40, all datasets achieve an average number of nodes queried per event of less than 2.5\% of the nodes in the network.
For the Traffic Fine datasets, 2.5\% are on average 0.25 nodes, whereas for the Hospital Log, those are on average 15.6 nodes.

\subsubsection{Communication Cost and Privacy}

Table~\ref{table:communication_vs_privacy} compares the centralized approach with the naive \project algorithm and \project using \optimization and batching in terms of communication overhead and privacy.
The centralized approach requires fewer messages per event, but at the cost of reduced privacy.
For the naive \project, the number of messages per event is equal to $2\cdot(n-1)+1$ with $n$ the number of activities/nodes. 
All nodes except the node itself are contacted, all other nodes reply, and the chosen node gets a confirmation message.
For \project with \optimization and batching we compute $2\cdot k + 1$ with $k$ the average number of queried nodes per event (see Fig.~\ref{fig:bar_all}).
In the centralized approach, event data is sent to a central entity right after the event, which requires only one message per event. 
However, this approach inherently relies on a central storage system where all events are collected, presenting privacy and scalability risks. 
Although the number of messages per event for \project with \optimization and batching is higher than in the centralized approach, it is a considerable improvement compared to the naive \project.
We use a batch size of 10 events.
As we show in Section~\ref{subsubsec:batching}, batching with larger batch sizes reduces the number of queried nodes and therefore the number of messages per event even further.
The trade-off presented here is clear: while the \project variants, particularly \project with \optimization and batching, cause higher communication overhead, they offer substantial scalability and privacy benefits by eliminating central storage and limiting non-aggregated communication to within the network.\\
\begin{table}[tb]
    \centering
    \rowcolors{2}{gray!10}{white}
        \caption{EdgeMiner's efficiency and privacy across datasets against Centralized PM (CS/CE = Central Storage/Entity).
        }
        \setlength{\tabcolsep}{3.5pt}
        \begin{tabularx}{\textwidth}{llccc}
            &   &    \textbf{Centr.} & \textbf{\thead[c]{Naive\\ \project}} & \textbf{\thead[c]{\project\\ with MFP \& \\Batching (10)}} \\
              & Traffic Fine & 1 & 21 & 1.76   \\
             \cellcolor{gray!10}     & Sepsis & 1 & 31 & 2.32  \\ 
               & Smart Factories & 1 & 41 & 2.30  \\ 
              \cellcolor{gray!10}   & BPIC 2017 &  1 & 51 & 1.76 \\
             \multirowcell{-5}{\textbf{\# Msg.}\\\cellcolor{gray!10}\textbf{per Event}}   & Hospital & 1 & 1247 & 26.46 \\ 
             \cellcolor{white} & Storage & None & No CS & No CS \\
            \cellcolor{white}\multirowcell{-2}{\cellcolor{white}\textbf{Privacy}\\\cellcolor{white}\textbf{Enablers}} & Msg. Scope & To CE & In-Network & In-Network
        \end{tabularx}
        \label{table:communication_vs_privacy}
\end{table}

\subsection{Footprint Matrix Sizes}
\label{subsec:FMsizes}

In the second phase of \project, nodes transmit their data to a (central) entity, generating a fixed number of messages.
These messages include the node's local FM along with two additional Booleans indicating whether a node recorded the start or end activity of a trace.
In \project, each node only keeps track of their predecessors in its FM, resulting in transmitting FMs across the network of a maximal size of $n \times 1$, where $n$ represents the total number of activities.
Even for datasets like the Hospital Log, which has an activity count of 624, each node transmits only a list consisting of at most 624 elements. 
Compared to existing distributed process discovery algorithms~\cite{alpha_miner_map_reduce,alpha_miner_distributed}, which often require sending up to quadratic FMs to a central entity, our approach is characterized by a compact message size.
Moreover, storing such matrix sizes on IoT nodes is feasible even with constrained resources. 
For long-running processes, \project can further discard event data related to own, predecessor, and successor events once a node's partial FM has been collected.

\subsection{Evaluation Discussion}
\label{subsec:discussion}

\project leverages data locality to enhance its efficiency. 
Unlike existing distributed process discovery algorithms~\cite{alpha_miner_map_reduce,alpha_miner_distributed}, which typically collect the event data completely before processing it and creating FMs or process models, \project adopts a proactive strategy by processing data at runtime and at its source. 
When the FM is requested, \project has already started the process discovery algorithm, by collecting partial FMs at each node, and thereby speeding up mining. 
Consequently, our strategy provides a significant time advantage, especially when handling large datasets with hundreds of activities and thousands of events.
When looking at conformance checking, \project enables online conformance checking by distributing the computation of FMs and making the partial FMs available at the data sources.

\section{Conclusion}
\label{sec:Conclusion}

In this paper, we introduce \project, a novel approach tailored to enable distributed process discovery and conformance checking in IoT networks.
By distributing computations to IoT and edge devices, \project addresses scalability and privacy concerns inherent in centralized approaches. 
We discuss how \project's design principles support common process mining algorithms that  construct footprint matrices or record direct successions, such as the Alpha, Heuristics and Inductive Miner~\cite{alpha_miner,heuristicMiner,leemans2013discovering}.
Moreover, integrating Most-Frequent-Predecessor Requesting reduces communication overhead, ensuring efficient event handling within the network. 
The batching of predecessor queries further minimizes communication overhead.

Our empirical evaluations showcase \project's stark reduction of messages in real-world scenarios compared to a naive predecessor determination approach.
\project achieves a reduction in communication for determining predecessor events by up to 96\% compared to querying all network nodes after each event.
By batching predecessor queries in groups of, for example, 40 events, experiments show an average reduction in the number of queried nodes per event to less than 2.5\% of all nodes.
We analytically and experimentally demonstrate the correctness of the algorithm and show how the intermediate results of \project quickly converge to the FM of the full event log. 
We highlight how \project provides a significant time advantage over algorithms lacking pre-processing capabilities while still producing the same FMs as the centralized algorithms.
By avoiding the storage of full event traces and sensitive personal data in central locations, \project enables data privacy through aggregation that eliminates personal identification.

\begin{acks}
  This work received funding by the Deutsche Forschungsgemeinschaft (DFG, German Research Foundation), grant 496119880.
\end{acks}

\bibliographystyle{ACM-Reference-Format}
\bibliography{edgeminer}

\appendix
\section{Appendix: Formal Correctness Proof}
\label{sec:Appendix}

We introduce a formal correctness proof. 
For this, we show that the merging of partial FMs in \project outputs the same FM as calculating the FM based on a full event log.

We begin by introducing notations before defining our claim. 
Let $FM^{all}$ denote the FM computed from an entire event log $L$ and $n := \vert T_L \vert$, where $T_L$ denotes the set of all activities in $L$.
Further, let $FM^i$ for $i\in \{0,\dots, n-1\}$ be the partial FM computed by node $i$, and $L_i$ the partial event log that node $i$ has knowledge of.
Since $i$ only keeps track of its predecessor events, its footprint matrix $FM^i$ contains only one column. 
In the following correctness claim, we only consider the footprint matrices and omit start and end activities. 
The design of \project ensures that a node considers an activity only a start activity or end activity if it corresponds to the start or end event of a full trace.
Further, we assume that the footprint matrices only consist of ones and zeroes. 
The correctness for footprint matrices with entries in $\mathbb{N}_0$ follows from the fact that \project detects all direct successions and adds a direct succession to exactly one partial footprint matrix.

\begin{claim}
    Merging the partial footprint matrices $FM^0, \dots, FM^{n-1}$ by concatenating them ordered by activity ID results in the same FM as the $FM^{all}$, formally
    $\begin{pmatrix} FM^0 &~ \dots ~& FM^{n-1} \end{pmatrix} ~=~ FM^{all}.$
\end{claim}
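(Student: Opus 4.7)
The plan is to reduce the matrix equality to an entrywise equivalence and then argue both directions of that equivalence based on the Phase~1 protocol. Let $M := \bigl( FM^0 \; \cdots \; FM^{n-1} \bigr)$ denote the concatenation. Because $FM^j$ is a single column corresponding to activity $j$, the entry $M_{ij}$ is exactly $FM^j_i$. Under the 0/1 assumption stated before the claim, it therefore suffices to show, for every pair $(i,j) \in \{0,\dots,n-1\}^2$, that $FM^j_i = 1$ if and only if $i >_L j$.

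For the soundness direction ($\Rightarrow$), I would inspect the update rule of \project: node $j$ increments $FM^j_i$ only when, after detecting one of its own events $e$ in some case $c$, the predecessor-search protocol commits an event $e'$ of activity $i$ as the immediate predecessor (same case, largest timestamp strictly less than that of $e$, and not yet bound to a successor). By Definition~\ref{def:directsuccession} applied to the trace of case $c$, the existence of consecutive entries $(i,j)$ yields $i >_L j$. I must also argue that any later decrement triggered by the rectification step only removes entries whose associated predecessor pair turned out not to be consecutive in the final trace, so nothing spurious remains after processing terminates.

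For completeness ($\Leftarrow$), I would take a witness trace $\sigma = (t_0, \dots, t_{m-1}) \in L$ and an index $k$ with $t_k = i$, $t_{k+1} = j$, and argue that when node $j$ processes the event at position $k+1$ it will eventually have $i$ recorded as predecessor. The straightforward case is when the predecessor search that $j$ initiates returns the correct event $e'$ on the first try; I would then handle the general case by invoking the rectification mechanism and arguing convergence. The natural tool is induction on the position $k$ within the trace, leveraging the uniqueness of timestamps per case so that the ``latest timestamp smaller than mine'' tie-breaking rule is deterministic, together with the fact that once the correct successor of $e'$ finally contacts $t_k$'s node, any earlier provisional binding is released and the false successor restarts its own search with strictly more information.

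The main obstacle, and the only part that is not routine bookkeeping, is this convergence argument for rectification under concurrent predecessor searches and possibly delayed messages. I would isolate it as a lemma stating that for each event $e$ in a finite log, after finitely many rectification rounds the predecessor pointer of $e$ stabilizes at the event immediately preceding it in its case (or at $\bot$ if $e$ is a trace start). The proof sketch uses a monotone potential: whenever a node $t_k$ receives a request from the true successor $t_{k+1}$, and its currently bound successor $t_{k+\ell}$ has a strictly larger timestamp, the binding is overwritten in favor of the smaller-timestamp event, so each event's bound successor strictly decreases in timestamp until it matches the true one. Once this lemma is in place, the claim's two directions follow immediately, and the extension from $\{0,1\}$-valued matrices to $\mathbb{N}_0$-valued ones comes for free because each direct succession in $L$ is accounted for exactly once in exactly one partial FM, as the paper's intuition already sketches.
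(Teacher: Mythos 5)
Your matrix-level reduction is exactly the paper's: reduce the equality to the entrywise statement $FM^j_i = 1 \Leftrightarrow i >_L j$ and check both inclusions, using the fact that column $j$ of the concatenation is $FM^j$ and relying on the $\{0,1\}$ normalization, with the $\mathbb{N}_0$ case handled by the same ``each direct succession is counted in exactly one partial FM'' remark. Where you diverge is in what you treat as needing proof: the paper's appendix argument takes the operational correctness of Phase~1 as given by design --- it simply asserts that $FM^j$ contains all and only the direct successions between activity $j$ and its predecessors, the convergence of rectification having been argued informally in the design section --- whereas you try to discharge that premise inside the proof, via an induction over trace positions and a stabilization lemma for the rectification mechanism with a monotone potential (the bound successor's timestamp strictly decreases until it matches the true immediate successor). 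That buys you a more self-contained and genuinely stronger result, since the claim as the paper proves it is really conditional on Phase~1 behaving as described; the cost is that your hardest ingredient, convergence of rectification under concurrent searches and message delays, is only sketched, and making the potential argument rigorous (in particular, showing no oscillation when several provisional bindings are released and re-acquired, and that decrements never remove a succession that is consecutive in the final trace) is nontrivial work the paper never attempts. Relative to the statement as the paper intends it, your plan is correct and structurally the same; just be aware that your lemma is an addition to, not a reconstruction of, the published proof, and it is the one place where real effort would still be required.
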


\begin{proof}
    To prove the claim, we show that all direct successions represented in the partial FMs can be found in $FM^{all}$ and vice versa.
    Thus proving the claim of equivalence between the merged partial FMs and the FM with unrestricted knowledge.

    By definition, $FM^{all} \in \{0,1\}^{n\times n}$.
    As already discussed, for $FM^i, ~i\in \{0,\dots, n-1\}$, we assume without loss of generality that $FM^i \in \{0,1\}^{n\times 1}$.
    It directly follows $\begin{pmatrix} FM^0 &~ \dots ~& FM^{n-1} \end{pmatrix} \in \{0,1\}^{n \times n}$.
    
    Let $i >_{L_j} j, ~i,j \in \{0, \dots, n-1\}$ be a direct succession encoded in the merged partial footprint matrix $\begin{pmatrix} FM^0 &~ \dots ~& FM^{n-1} \end{pmatrix}$. 
    Due to the concatenation of the partial matrices, it holds $FM^j_i = 1$.
    $FM^j$ contains all direct successions between its activity and its preceding activities.
    This means there is a trace in the event log $L$ in which $i$ is followed by $j$. 
    By definition, $FM^{all}$ captures all direct successions in $L$, meaning it must hold $FM^{all}_{ij} = 1$.

    It remains to show that the opposite direction holds.
    Let $k>_L \ell,~k, \ell \in \{0, \dots, n-1\}$ be a direct succession represented in $FM^{all}$, formally $FM^{all}_{k\ell} = 1$.
    Since each node keeps track of all its predecessor activities in its local FM, we know that $FM^\ell_{k} = 1$.
    Merging all partial FMs only means concatenating them ordered by activity ID.
    Therefore, it follows $\begin{pmatrix} FM^0 &~ \dots ~& FM^{n-1} \end{pmatrix}_{k\ell} = 1$.
    Thus proving the claim.
\end{proof}

\end{document}